\documentclass{article}

\usepackage{arxiv}

\usepackage[utf8]{inputenc} 
\usepackage[T1]{fontenc}    
\usepackage{hyperref}       
\usepackage{url}            
\usepackage{booktabs}       
\usepackage{amsfonts} 
\usepackage{amsmath,amsthm} 
\usepackage{nicefrac}       
\usepackage{microtype}      
\usepackage{lipsum}		
\usepackage{graphicx}
\usepackage{natbib}
\usepackage{doi}

\newcommand{\reais}{I\!\! R}

\usepackage{hyperref}
\usepackage{mathtools} 
\usepackage{rotating}
\usepackage{IEEEtrantools}
\usepackage{comment}
\usepackage{algpseudocode, algorithmicx,algorithm}
\usepackage{adjustbox}
\usepackage{subfig}
\usepackage{tabularx,booktabs}
\usepackage{rotating}

\newtheorem{prop}{Proposition}
\newtheorem{lema}{Lemma}
\newtheorem{teorema}{Theorem}

\title{A note on solving the envy-free perfect matching problem with qualities of items}


\author{ Marcos Salvatierra\\ Normal Superior School \\ Amazonas State University 
}

\date{}


\hypersetup{
pdftitle={A template for the arxiv style},
pdfsubject={q-bio.NC, q-bio.QM},
pdfauthor={David S.~Hippocampus, Elias D.~Striatum},
pdfkeywords={First keyword, Second keyword, More},
}

\begin{document}
\maketitle

\begin{abstract}
In the envy-free perfect matching problem, $n$ items with unit supply are available to be sold to $n$ buyers with unit demand. The objective is to find allocation and prices such that both seller's revenue and buyers' surpluses are maximized -- given the buyers'  valuations for the items -- and all items must be sold. A previous work has shown that this problem can be solved in cubic time, using maximum weight perfect matchings to find optimal envy-free  allocations and shortest paths to find optimal envy-free prices. In this work, I consider that buyers have fixed budgets, the items have quality measures and so the valuations are defined by multiplying these two quantities. Under this approach, I prove that the valuation matrix have the inverse Monge property, thus simplifying the search for optimal envy-free allocations and, consequently, for optimal envy-free prices through a strategy based on dynamic programming. As result, I propose an algorithm that finds optimal solutions in quadratic time.

\end{abstract}



\keywords{combinatorial optimization \and envy-free pricing \and polynomial-time algorithm \and dynamic programming}



\section{Introduction}
\label{sec:intro}

Suppose a seller wants to sell $n$ similar items -- for example, cars -- and $n$ buyers are interested in buying one car each. The cars have different quality attributes such brand, numbers of doors, metallic paint, steering assistance, etc. Moreover, each buyer has a budget available to purchase a car. In this way, the seller will estimate the valuation that each buyer will have for each car, multiplying the buyer's budget by the quality measure of the car. Furthermore, each buyer wants to be happy with the price of the car sold to him/her, in the sense that the price is less than or equal to his/her valuation for that car, and furthermore, the car sold to him/her is the one that gives him/her the greatest surplus, i.e., the difference between the valuation for the car and the price of the car sold to him/her is greater than or equal to any other difference between the valuation and the price of another car.

So, formally, we have a set of $n$ items with unit supply, a set of $n$ buyers with unit demand, each buyer $i$ has a budget $v_i$, each item $j$ has a quality measure $q_j$, and then the valuation of buyer $i$ for item $j$ is given by $v_{i,j} = v_i q_j$. The valuation matrix is defined by $V = (v_{i,j}) \in \reais_{>0}^{n \times n}$. The seller needs to decide which item to sell to which buyer. In terms of decision variables, $x_{i,j}=1$ if buyer $i$ buys item $j$ and $x_{i,j}=0$ otherwise. In addition, the prices set should maximize both the seller's revenue and the buyers' surpluses, i.e., should maximize $\sum_{j=1}^n p_j$ and should satisfy $v_{i,j}x_{i,j} - p_j \geq v_{i,k}x_{i,j} - p_k$ for all $k \neq j$, if buyer $i$ buys item $j$, with $v_{i,j} x_{i,j} \geq p_j$. Finally, the seller needs to sell all items, i.e., $\sum_{j=1}^n x_{i,j} = 1$ for all $i = 1, \ldots , n$.

In this scenario, we say that the seller needs to find an optimal {\it envy-free allocation} for the buyers and optimal {\it envy-free prices} for the items. This is a special case of the {\it envy-free perfect matching problem}, which was studied by \cite{Arbib2019}. This last problem, in turn, is a particular case of the {\it unit-demand envy-free pricing problem}, proposed by \cite{Guruswami2005}. Other particular cases of the unit-demand envy-free pricing problem that can be solved in polynomial-time were also studied by \cite{gunluk2008pricing}, \cite{chen2011optimal} and  \cite{marcosms2021a14100279}, for example.

Having made these considerations, the main objective of this work is to propose an algorithmic strategy using a dynamic programming approach that finds optimal envy-free allocations and optimal envy-free prices in quadratic time for this particular case, thus reducing the time complexity proved by  \cite{Arbib2019} for the more general envy-free perfect matching problem.

\section{Mathematical properties}
\label{sec:math}

First, let's analyze the properties of the valuation matrix. Without loss of generality, for the remainder of this section we will assume that $v_i \geq v_j$ and $q_i \geq q_j$ if $i<j$. In this way, we have $v_{i,j} \geq v_{i,k}$ if $j < k$ and $v_{i,j} \geq v_{r,j}$ if $i < r$. So we have, for $i < j$,
 $$v_i (q_i - q_j) \geq v_j ( q_i - q_j ) \Rightarrow v_i q_i - v_i q_j \geq v_j q_i - v_j q_j \Rightarrow v_i q_i + v_j q_j \geq v_j q_i + v_i q_j.    $$

Therefore, $v_{i,i} + v_{j,j} \geq v_{j,i} + v_{i,j}$. This property is precisely the so called {\it inverse Monge property}. \cite{burkard1996perspectives} showed that such matrices has the maximum trace in the main diagonal, i.e., the identity matrix is the permutation that maximizes $tr(\Pi^TV)$, where $\Pi$ is a permutation matrix, and hence the optimal envy-free allocation for our problem is $x_{i,j} = 1$ if $i = j$ and $x_{i,j} = 0$, otherwise. From the point of view of graph theory, the identity matrix defines a maximum weight perfect matching in the bipartite graph whose costs are in matrix $V$, and from an economic perspective, this allocation maximizes social welfare, i.e., maximizes $\sum_{i=1}^n \sum_{j=1}^n x_{i,j} v_{i,j}$.

Once the optimal envy-free allocation is found, it remains to find the optimal envy-free prices. It can be clearly noted that the buyer with the lowest valuation for an item will buy this item for this price. I will formally prove this fact in the following proposition:

\begin{prop} \label{prop:minprice}
The optimal envy-free price of item $n$ is $p_n = v_{n,n}$.
\end{prop}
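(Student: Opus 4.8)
The plan is to argue that the price of item $n$ can be pushed all the way up to $v_{n,n}$ without violating any envy-freeness or individual-rationality constraint, and that it cannot be pushed any higher. Recall that with the identity allocation, buyer $n$ receives item $n$, so the individual-rationality constraint for buyer $n$ reads $v_{n,n} \geq p_n$; this immediately gives the upper bound $p_n \leq v_{n,n}$.

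For the reverse inequality I would show that setting $p_n = v_{n,n}$ is feasible, i.e., consistent with some choice of the remaining prices $p_1,\dots,p_{n-1}$. Since $v_{i,j}=v_i q_j$ and $n$ is the item of lowest quality, we have $v_{i,n} = v_i q_n \geq v_n q_n = v_{n,n} = p_n$ for every buyer $i$; thus no buyer would ever be priced out of item $n$, and in particular the individual-rationality of buyer $n$ holds with equality. The envy-freeness condition that must be checked is that no buyer $i<n$, currently assigned item $i$, strictly prefers item $n$: we need $v_{i,i} - p_i \geq v_{i,n} - p_n$. The key observation is that the optimal envy-free prices are characterized (as in \cite{Arbib2019}) by shortest-path / tight-chain conditions in which prices are determined from the bottom up; intuitively, raising $p_n$ to its maximal value $v_{n,n}$ only relaxes the surplus that item $n$ offers to other buyers, which in turn permits (indeed forces, for optimality of revenue) the prices $p_i$ of the better items to be as large as possible. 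So the concrete step is: assume $p_n = v_{n,n}$, and exhibit the companion prices — the natural candidate is the recursively defined $p_j = p_{j+1} + (v_{j,j} - v_{j,j+1})$, or equivalently $p_j = v_{j,j} - \sum_{k>j}(v_{k,k+1}-v_{k,k}) $ appropriately telescoped — and verify the full envy-free system holds; then conclude by the revenue-maximality that $p_n$ must in fact equal $v_{n,n}$ in every optimal solution, since any optimal solution can have all prices shifted so that the binding item is $n$.

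The main obstacle I anticipate is the ``no smaller value is forced'' direction: showing that $p_n = v_{n,n}$ is not merely feasible but is attained by \emph{every} optimal envy-free price vector, as the proposition asserts an equality rather than a bound. This requires invoking that the overall price vector is uniquely pinned down (up to the global anchoring) by the combination of envy-freeness plus revenue-maximization, so that the lowest-indexed ``free'' coordinate — here $p_n$, the price of the least valued item held by the buyer of least budget — is driven up against its individual-rationality ceiling $v_{n,n}$. I would handle this by a short exchange/perturbation argument: if $p_n < v_{n,n}$ in some optimal solution, then every price can be increased by $\varepsilon = v_{n,n}-p_n > 0$ while preserving all surplus differences $v_{i,i}-p_i - (v_{i,k}-p_k)$ and keeping each $p_k \leq v_{k,k}$ (using $v_{k,k}\geq v_{n,n} > p_k + \varepsilon$ fails in general, so more care is needed — one increases only a suitable upward-closed set of prices along a tight path), strictly increasing $\sum_j p_j$ and contradicting optimality. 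The inverse Monge structure guarantees the diagonal dominates, which is what makes the individual-rationality constraint of buyer $n$ the operative one. Once this perturbation is set up correctly, the proposition follows.
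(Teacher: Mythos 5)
Your overall strategy matches the paper's: the upper bound $p_n \le v_{n,n}$ comes from buyer $n$'s individual-rationality constraint, and the equality comes from a revenue/exchange argument showing that $p_n < v_{n,n}$ cannot be optimal. (The paper's own proof is exactly these two sentences, stated informally.) However, your execution of the second direction stalls at a point that actually has a one-line resolution, and you abandon the simplest version of the perturbation prematurely. You write that raising \emph{every} price by $\varepsilon = v_{n,n} - p_n$ ``fails in general'' because the individual-rationality constraints $p_k \le v_{k,k}$ might block the shift, and you retreat to an unexecuted ``upward-closed set along a tight path'' argument. But the uniform shift does work, and the reason is already contained in the envy-freeness system you wrote down: for any buyer $k$, envy-freeness with respect to item $n$ gives $v_{k,k} - p_k \ge v_{k,n} - p_n$, and since $v_{k,n} = v_k q_n \ge v_n q_n = v_{n,n}$ (item $n$ has the lowest quality, buyer $n$ the lowest budget), we get $v_{k,k} - p_k \ge v_{n,n} - p_n = \varepsilon$, i.e.\ $p_k + \varepsilon \le v_{k,k}$ for every $k$. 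So adding $\varepsilon$ to all prices preserves every surplus difference, keeps every IR constraint satisfied, and strictly increases revenue --- a contradiction. This closes the gap you flagged and makes the argument cleaner than the paper's own version, which silently assumes that raising $p_n$ alone preserves envy-freeness of buyer $n$ toward the other items (a point that does require checking, since buyer $n$'s surplus drops to zero).

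A secondary remark: the explicit companion prices $p_j = p_{j+1} + (v_{j,j} - v_{j,j+1})$ you propose for the feasibility direction are not needed for this proposition (they belong to the subsequent lemma on computing the remaining prices), and including them here makes the proof circular-looking, since their optimality is established only later. The proposition only needs: (i) existence of some optimal envy-free price vector, (ii) the IR ceiling, and (iii) the uniform-shift contradiction above.
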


\begin{proof}
Indeed, if $p_n < v_{n,n}$, then the total revenue earned from selling the items will be decreased by $v_{n,n}-p_n$, instead of item $n$ being sold at the price $v_{n,n}$. On the other hand, if $p_n > v_{n,n}$, then the surplus of buyer $n$ will be negative, and the item will not be allocated to him.
\end{proof}

Now that we have the lowest optimal envy-free price set, a natural way to find the remaining optimal envy-free prices is to search upwards. Let's think intuitively: in a $2 \times 2$ valuation matrix, we already know that item 2 is allocated to buyer 2, and that the optimal envy-free price is $v_{2,2}$. Then, we must calculate a price for buyer 1 based on $v_{1,1}$ that maximizes his surplus. Thus, we must calculate surplus of buyer 1 if he were to buy item 2, that is, we must calculate $v_{1,2} - v_{2,2}$. Since $v_{1,2}-v_{2,2} \geq 0$, this value should be the maximum surplus of buyer 1, so the optimal envy-free price of item 1 will be $p_1 = v_{1,1}- ( v_{1,2} - v_{2,2} )$. Continuing with this reasoning, the next optimal envy-free prices will be calculated by observing the surpluses of the next buyers with the optimal envy-free prices already found.

This intuitive way of finding optimal envy-free prices will be generalized in the following lemma:

\begin{lema}\label{lemma:nextprice}
If the $k>1$ lowest optimal envy-free prices have already been found, the next optimal envy-free price $p_{n-k}$, in ascending order, is found as follows:
$$ p_{n-k} = v_{n-k, n-k} - \max_{i = (n - k+1), \ldots , n} \{ v_{n-k,i} - p_i  \}$$
\end{lema}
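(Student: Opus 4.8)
The plan is to recast envy-freeness, for the identity allocation (already known to be optimal by the inverse Monge property), as a system of difference constraints on the price vector, and then to exhibit one distinguished price vector $p^{*}$ that is simultaneously feasible and pointwise maximal among all envy-free vectors. Since buyer $i$ receives item $i$, envy-freeness is equivalent to $p_a - p_b \le v_{a,a} - v_{a,b}$ for every pair of indices $a,b$, together with individual rationality $p_a \le v_{a,a}$, and the objective is to maximize $\sum_a p_a$. Proposition~\ref{prop:minprice} fixes $p_n = v_{n,n}$. The upper-bound half of the lemma is then immediate: in any envy-free vector, the constraint of buyer $n-k$ against each item $i \in \{n-k+1,\dots,n\}$ gives $p_{n-k} \le v_{n-k,n-k} - v_{n-k,i} + p_i$, and minimizing over $i$ yields $p_{n-k} \le v_{n-k,n-k} - \max_{i>n-k}\{v_{n-k,i}-p_i\}$.

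For the reverse direction I would define $p^{*}$ top-down by $p^{*}_n = v_{n,n}$ and $p^{*}_a = v_{a,a} - \max_{i>a}\{v_{a,i}-p^{*}_i\}$, and prove: (i) every envy-free vector $p$ satisfies $p_a \le p^{*}_a$ for all $a$; and (ii) $p^{*}$ is itself envy-free. Claim (i) is a downward induction: the feasibility bound $p_a \le v_{a,a} - \max_{i>a}\{v_{a,i}-p_i\}$ together with $p_i \le p^{*}_i$ for $i>a$ (induction hypothesis) forces $p_a \le p^{*}_a$. Granting (i) and (ii), $p^{*}$ is the unique revenue-maximizing envy-free vector, so the optimal prices are exactly $p^{*}$, and the formula in the statement is nothing but the defining recursion of $p^{*}$ at index $n-k$, with $p_i = p^{*}_i$ being the $k$ already-computed optimal prices.

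The technical heart is (ii), and this is where the product structure $v_{a,j}=v_a q_j$ with the co-monotone orderings $v_1\ge\cdots\ge v_n$, $q_1\ge\cdots\ge q_n$ is essential. First, by the same downward induction, I would show the maximum defining $p^{*}_a$ is always attained at $i=a+1$: indeed $(v_{a,i}-p^{*}_i)-(v_{a,i+1}-p^{*}_{i+1}) = (v_a-v_i)(q_i-q_{i+1}) \ge 0$ once $p^{*}_i - p^{*}_{i+1} = v_i(q_i-q_{i+1})$ is known for $i>a$. This collapses the recursion to $p^{*}_a = p^{*}_{a+1} + v_a(q_a-q_{a+1})$, hence the closed form $p^{*}_a = v_n q_n + \sum_{t=a}^{n-1} v_t(q_t-q_{t+1})$, so prices are non-increasing in the index. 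The difference constraint $p^{*}_a - p^{*}_b \le v_a(q_a-q_b) = v_{a,a}-v_{a,b}$ then follows by telescoping: for $b>a$ bound the terms $v_t(q_t-q_{t+1})$, $t\ge a$, above using $v_t \le v_a$; for $b<a$ bound $p^{*}_b - p^{*}_a$ below using $v_t \ge v_a$ for $t<a$. Individual rationality drops out of the same closed form, since $v_{a,a}-p^{*}_a = (v_a-v_n)q_n + \sum_{t=a}^{n-1}(v_a-v_t)(q_t-q_{t+1}) \ge 0$.

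The main obstacle is precisely organizing (ii): verifying that $p^{*}$ respects the constraints of buyer $a$ against strictly \emph{better} items ($b<a$), which is the only place where co-monotonicity is genuinely used and which would fail for an arbitrary valuation matrix; it also requires being careful that the reduction "$\max$ over $i>a$ is attained at $a+1$'' is threaded correctly through the downward induction rather than assumed. Everything else — the upper-bound argument, the collapse of the general maximum to a one-step recursion, and the resulting quadratic-time evaluation of all prices — is routine once this structural fact is in place.
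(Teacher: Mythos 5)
Your proposal is correct, and it is considerably more complete than the argument the paper actually gives. The paper's proof of Lemma~\ref{lemma:nextprice} consists only of listing the surpluses $v_{n-k,i}-p_i$ for the already-priced items $i>n-k$, asserting that their maximum must be subtracted from $v_{n-k,n-k}$ ``for the envy-free condition to hold,'' and remarking that subtracting anything larger loses revenue; this is exactly your upper-bound half together with your claim (i). What the paper never verifies is that the price vector produced by this recursion is globally envy-free: it does not check that buyers $n-k+1,\dots,n$ do not envy item $n-k$ at its new, higher price, nor that buyer $n-k$ will remain unenvious of the items $1,\dots,n-k-1$ once those are priced later. Your part (ii) --- showing the maximum defining $p^{*}_a$ is attained at $i=a+1$ via $(v_a-v_i)(q_i-q_{i+1})\ge 0$, deriving the closed form $p^{*}_a=v_nq_n+\sum_{t=a}^{n-1}v_t(q_t-q_{t+1})$, and checking every difference constraint and individual rationality by telescoping --- supplies precisely this missing feasibility argument, and it is the only place where the product structure $v_{i,j}=v_iq_j$ and the co-monotone orderings are genuinely used (for an arbitrary valuation matrix the greedy recursion can produce a vector that violates the constraints of lower-indexed buyers). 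In short, both proofs rest on the same recursion and the same ``pointwise-maximal feasible price vector'' idea, but yours closes a real gap in the paper's two-sentence justification, at the cost of the extra structural lemma about where the maximum is attained, which the paper implicitly assumes away.
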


\begin{proof}
Indeed, the surpluses of buyer $n-k$ are:
\begin{IEEEeqnarray*}{rCl}
	v_{n-k, n-k+1}& - & p_{n-k+1}\\
	v_{n-k, n-k+2} & - & p_{n-k+2}\\
& \vdots & \\
v_{n-k, n} & - & p_n
\end{IEEEeqnarray*}

Thus, the maximum of these values must be subtracted from $v_{n-k,n-k}$ for the envy-free condition to hold. Furthermore, if any other value greater than this maximum is subtracted from $v_{n-k,n-k}$, the resulting price will decrease the seller's revenue.  
\end{proof}

\section{Algorithm for the problem}
\label{sec:algo}

The mathematical properties of the problem, explored in the previous section, allows me to present the algorithm below that finds an optimal envy-free allocation and optimal envy-free prices, given as input: $n$ (the number of items and buyers), $v_i$ for all $i = 1, \ldots, n$ (the buyers' valuations) and $q_j$ for all $j = 1, \ldots, n$ (the items qualities).

\begin{enumerate}
    \item Do a permutation $\pi: \{1, 2, \ldots, n  \} \rightarrow \{1, 2, \ldots, n  \}$ such that $v_{\pi^{-1}(1)} \geq v_{\pi^{-1}(2)} \geq \ldots \geq v_{\pi^{-1}(n)}$.
    
    \item Do a permutation $\sigma: \{1, 2, \ldots, n  \} \rightarrow \{1, 2, \ldots, n  \}$ such that $q_{\sigma^{-1}(1)} \geq q_{\sigma^{-1}(2)} \geq \ldots \geq q_{\sigma^{-1}(n)} $.
    
    \item Construct the valuation matrix $V = (v_{i,j})$ such that $v_{i,j} = v_{\pi^{-1}(i)} q_{\sigma^{-1}(j)}$.
    
    \item Allocate item $\pi^{-1}(i)$ to buyer $\pi^{-1}(i)$ for all $i = 1, 2, \ldots, n$.
    
    \item Set price $p_{\pi^{-1}(n)} = v_{n,n}$.
    
    \item Use Lemma~\ref{lemma:nextprice} to find the remaining optimal envy-free prices.
\end{enumerate}

After exhibiting the algorithm designed to solve the problem in question, I will present the main result of this work.

\begin{teorema}
The envy-free perfect matching problem with qualities of items can be solved in $O(n^2)$ time.
\end{teorema}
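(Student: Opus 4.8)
The statement packages two tasks: showing that the six-step procedure of Section~\ref{sec:algo} outputs an optimal envy-free allocation together with optimal envy-free prices, and showing that it runs within the stated budget. The plan is to settle correctness first by assembling the structural facts already in hand, then tally the cost of the steps. Steps 1--3 relabel buyers and items in non-increasing order of $v_i$ and of $q_j$ and build $V = (v_{\pi^{-1}(i)} q_{\sigma^{-1}(j)})$; this is exactly the configuration treated in Section~\ref{sec:math}, so $V$ has the inverse Monge property $v_{i,i} + v_{j,j} \ge v_{i,j} + v_{j,i}$ for $i < j$. By \cite{burkard1996perspectives} the identity permutation then maximises $tr(\Pi^{T} V)$, so the diagonal allocation of Step 4 is a maximum-weight perfect matching and hence, as noted in Section~\ref{sec:math}, an optimal envy-free allocation that also maximises social welfare. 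Step 5 installs $p_{\pi^{-1}(n)} = v_{n,n}$, certified by Proposition~\ref{prop:minprice}, and Step 6 produces the remaining prices via Lemma~\ref{lemma:nextprice}.

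The point that deserves real care -- and where I expect the only non-routine work to lie -- is that the prices from Step 6 are \emph{globally} envy-free, not merely envy-free against the items already priced. Lemma~\ref{lemma:nextprice} only gives $v_{i,i} - p_i \ge v_{i,j} - p_j$ for $j > i$, so I would still verify the same inequality for $j < i$: using the recurrence for $p_j$ with the index $\ell = i > j$ kept in its maximum yields $p_j \ge v_{j,j} - v_{j,i} + p_i$, and subtracting $p_i$ and applying the inverse Monge inequality $v_{i,i} + v_{j,j} \ge v_{i,j} + v_{j,i}$ gives $v_{i,i} - p_i \ge v_{i,j} - p_j$ -- so the inverse Monge property is invoked a second time, which is the heart of the argument. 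Non-negativity of every surplus follows by a short downward induction, since the surplus of buyer $i$ equals $\max_{\ell > i}\{ v_{i,\ell} - p_\ell \} \ge v_{i,n} - v_{n,n} \ge 0$ using $v_i \ge v_n$. For optimality of the price vector I would show, by induction from $i = n$ downwards, that every envy-free price vector $p'$ on this allocation satisfies $p'_i \le p_i$: from $p'_\ell \le p_\ell$ for $\ell > i$ together with $p'_i \le v_{i,i} - (v_{i,\ell} - p'_\ell) \le v_{i,i} - (v_{i,\ell} - p_\ell)$, minimising over $\ell > i$ gives $p'_i \le p_i$, and summing shows the algorithm maximises $\sum_j p_j$.

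\emph{Running time.} Steps 1 and 2 are sorts, $O(n \log n)$; Step 3 fills an $n \times n$ matrix, $O(n^2)$; Step 4 is $O(n)$ and Step 5 is $O(1)$; in Step 6 the price $p_{n-k}$ needs a maximum over $k$ already-computed quantities, so the total over $k = 1, \ldots, n-1$ is $\sum_k O(k) = O(n^2)$. Summing, the running time is $O(n^2)$, dominated by Steps 3 and 6, which is the claimed bound. I do not anticipate a genuine obstacle: the computations above are routine, and the single substantive ingredient is the extension of Lemma~\ref{lemma:nextprice} to all buyer-item pairs, for which the inverse Monge inequality does exactly the required work.
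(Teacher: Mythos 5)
Your running-time tally coincides with the paper's own proof of this theorem, which consists of nothing more than the operation count (two sorts, filling an $n\times n$ matrix, one maximum per price, $O(n^2)$ overall), so that half of your proposal is fine and even slightly sharper ($O(n\log n)$ for the sorts). Where you depart from the paper is in trying to certify correctness inside the theorem, and you have put your finger on exactly the point the paper leaves unaddressed: Lemma~\ref{lemma:nextprice} and its proof only compare buyer $n-k$ against the items $n-k+1,\ldots,n$, so the computed prices are only shown to be envy-free ``downwards'', and neither the lemma nor the paper's proof of the theorem verifies $v_{i,i}-p_i \geq v_{i,j}-p_j$ for $j<i$. Identifying that as the substantive obligation is the right instinct.

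However, your discharge of that obligation is broken. From the recurrence $p_j = v_{j,j} - \max_{\ell>j}\{v_{j,\ell}-p_\ell\}$, keeping $\ell=i$ in the maximum gives $\max_{\ell>j}\{v_{j,\ell}-p_\ell\} \geq v_{j,i}-p_i$ and hence $p_j \leq v_{j,j}-v_{j,i}+p_i$ --- an \emph{upper} bound, whereas you wrote $p_j \geq v_{j,j}-v_{j,i}+p_i$, and your subsequent Monge step needs that lower bound. The lower bound is in fact false: for $v=(3,2,1)$ and $q=(3,2,1)$ the algorithm gives $p=(6,3,1)$, while $v_{1,1}-v_{1,3}+p_3 = 9-3+1 = 7 > 6 = p_1$. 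The statement you actually need is the weaker $p_j \geq v_{i,j}-v_{i,i}+p_i$, equivalently $s_j - s_i \leq (v_j - v_i)q_j$ where $s_i = v_{i,i}-p_i$; this is true but requires a genuine induction on $i-j$ rather than a one-line substitution. Write $s_j = (v_j - v_{\ell^*})q_{\ell^*} + s_{\ell^*}$ for the maximising index $\ell^*>j$ and split into cases: if $\ell^*=i$ then $s_j-s_i=(v_j-v_i)q_i$; if $\ell^*>i$ use $s_i \geq (v_i-v_{\ell^*})q_{\ell^*}+s_{\ell^*}$; if $j<\ell^*<i$ apply the inductive hypothesis to the pair $(\ell^*,i)$. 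Each case yields $s_j-s_i \leq (v_j-v_i)q_{\ell^*} \leq (v_j-v_i)q_j$, which combined with $(v_j-v_i)q_j = v_{j,j}-v_{i,j}$ gives the required no-envy condition. With that repair your correctness argument (including the price-maximality induction, which is sound) goes through; as written, the central inequality points the wrong way.
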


\begin{proof}
Steps 1 and 2 can be performed in $O(n^2)$ time by any sorting algorithm. Step 3 clearly can be performed in $O(n^2)$ time, since is a construction of a $n$-by-$n$ matrix. Step 4 performs $O(n)$ time operations and Step 5  performs a single $O(1)$ time operation. Finally, Step 6 performs $n-1$ operations of searching maximum values and simple arithmetic computations, so in $O(n^2)$ time. Therefore, the overall time complexity for finding optimal solutions for the envy-free perfect matching problem with qualities of items is $O(n^2)$.
\end{proof}

\section{Conclusions}
\label{sec:conclusions}

In this work, I addressed the envy-free perfect matching problem by incorporating an element into it: the items have quality measures. To take these aspects into account is not an artificial consideration, but it can be found in concrete situations, and even in the literature we can find works that explore these aspects about the items, such as the study by \cite{Chen2016}.

In this way, I showed that the valuation matrix has the inverse Monge property, thus simplifying the search for optimal envy-free allocations and optimal envy-free prices. Furthermore, we showed that this property of the valuation matrix provides an optimal substructure for the problem, enabling the development of a dynamic programming strategy that finds optimal solutions to the problem in quadratic time.

Although many particular cases of the unit-demand envy-free pricing problem can be optimally solved in polynomial-time, an interesting research question is to develop fast heuristics and/or approximation algorithms aiming at experimental treatments of the problem when dealing with instances of large magnitude, such as the proposals by \cite{Shioda2011} and \cite{Myklebust2016}, for example.

This research did not receive any specific grant from funding agencies in the public, commercial, or not-for-profit sectors.

\bibliographystyle{unsrtnat}
\bibliography{references}  






\end{document}